\documentclass[reqno]{amsart}
\usepackage[T1]{fontenc}
\usepackage[utf8]{inputenc}
\usepackage{amsmath,amssymb,amsthm,mathtools}
\usepackage{lmodern,xspace}
\usepackage[a4paper,margin=3cm]{geometry}
\usepackage{hyperref}
\hypersetup{colorlinks=true,linkcolor=blue,citecolor=blue,urlcolor=blue}
\usepackage{microtype}

\newtheorem{theorem}{Theorem}[section]
\newtheorem{lemma}[theorem]{Lemma}
\newtheorem{corollary}[theorem]{Corollary}
\newtheorem{proposition}[theorem]{Proposition}
\theoremstyle{definition}
\newtheorem{definition}[theorem]{Definition}
\theoremstyle{remark}
\newtheorem{remark}[theorem]{Remark}

\newcommand{\bthm}{\begin{theorem}}\newcommand{\ethm}{\end{theorem}}
\newcommand{\blem}{\begin{lemma}}\newcommand{\elem}{\end{lemma}}
\newcommand{\bcor}{\begin{corollary}}\newcommand{\ecor}{\end{corollary}}
\newcommand{\bpr}{\begin{proposition}}\newcommand{\epr}{\end{proposition}}
\newcommand{\bdf}{\begin{definition}}\newcommand{\edf}{\end{definition}}
\newcommand{\brm}{\begin{remark}}\newcommand{\erm}{\end{remark}}
\usepackage[utf8]{inputenc}
\usepackage[T1]{fontenc}
\usepackage{amsmath,amsthm,amssymb,amsfonts,physics}
\usepackage{hyperref}
\usepackage{mathrsfs}
\usepackage{graphicx}
\usepackage{bm}
\usepackage{enumitem}

\newcommand{\de}{{\rm d}}
\newcommand{\R}{\ensuremath{\mathbb R}\xspace}

\theoremstyle{definition}
\theoremstyle{remark}
\newcommand{\A}{\mathcal A}

\newcommand{\bt}{\begin{theorem}}                     
	\newcommand{\et}{\end{theorem}}                       
\newcommand{\bd}{\begin{definition}}                  
	\newcommand{\ed}{\end{definition}}                    
\newcommand{\bl}{\begin{lemma}}                       
	\newcommand{\el}{\end{lemma}}                                   
\newcommand{\bere}{\begin{remark}}                      
	\newcommand{\ere}{\end{remark}}                       

\newcommand{\beq}{\begin{equation}}
	\newcommand{\eeq}{\end{equation}}
\def\bal#1\eal{\begin{align}#1\end{align}}              
\def\baln#1\ealn{\begin{align*}#1\end{align*}}          
\def\bml#1\eml{\begin{multline}#1\end{multline}}        
\def\bmln#1\emln{\begin{multline*}#1\end{multline*}}  
\def\bga#1\ega{\begin{gather}#1\end{gather}}
\def\bgan#1\egan{\begin{gather*}#1\end{gather*}}

\title[Static spacetimes  with a Finsler angular sector]{Static spacetimes  with  a Finsler angular sector}
\author[E. Caponio]{Erasmo Caponio}
\address{Dipartimento di Meccanica, Matematica e Management, Politecnico di Bari, Italy}
\email{erasmo.caponio@poliba.it}

\date{}
\begin{document}
\begin{abstract}
We consider  static spacetimes  in spherical coordinates whose angular sector is described by a Finsler metric rather than the standard round metric on $S^2$.  
Our first contribution is \emph{kinematical}: maintaining arbitrary  lapse and radial factors $e^{\nu(r)}$, $e^{\vartheta(r)}$, and relying solely on Killing symmetries and the null constraint, we derive model--independent relations for circular photon orbits and the effective  dynamics. 
By specializing the angular sector to  Randers sphere of constant  positive  flag curvature,  we obtain  exact expressions for the conserved angular charge, the critical impact parameter and we quantify a Finslerian Sagnac--type effect. 

\noindent Our second contribution is \emph{dynamical}: we examine  the field equations used in the literature to determine  $(e^{\nu},e^{\vartheta})$. We revisit the family of hairy black holes  in \cite{Nekouee2025}, 
 demonstrating that  the analysis therein neglects crucial non-reversible Finsler features. Furthermore, we show  that the  solutions presented as new reproduce previously known results in \cite{Ovalle2021}.

\end{abstract}
\keywords{Static Lorentz--Finsler spacetime, Randers, constant flag curvature, hairy black holes} 
	\maketitle

\section{Introduction}
The quest to understand  gravity beyond Einstein’s general relativity motivates geometric extensions of spacetime. Among them, Lorentz--Finsler geometry relaxes the quadratic restriction on the metric by replacing it with a positively homogeneous, of degree $2$ Lagrangian, while retaining key structures such as causal cones and Noether symmetries. Because the metric depends on both position and direction, anisotropy becomes a built-in feature. This provides a natural framework to model departures from local isotropy or  Lorentz invariance within a well-defined causal theory \cite{perlick06,Minguzzi2015CMP,Minguzzi2019RMP,JavaloyesSanchez2020RACSAM,HohmannPfeiferVoicu2022}.

An  interesting class of Lorentz--Finsler spacetimes arises when the Finslerian structure is confined to specific sectors of the geometry. In this work, we consider static spacetimes in spherical coordinates where the sphere $S^2$ (also called here {\em angular sector}) carries a positive, strongly convex,  Finsler metric \cite{Bao2000} while the temporal and radial sectors remain Lorentzian. 
This hybrid structure provides a tractable framework for exploring  Finslerian effects while maintaining a structure familiar from general relativity \cite{CapSta16}. 
The study of black hole solutions in this class initiated with the work of Li and Chang \cite{LiChang2014}. These solutions naturally raise questions about how the direction--dependent metric structure affects fundamental black hole properties such as photon spheres, shadows, and thermodynamic characteristics. 

The choice of the Finsler metric on the angular sector is not arbitrary, but the mathematical freedom available in the Finslerian setting is remarkably different from the Riemannian case. In Riemannian geometry, the Killing--Hopf theorem severely constrains the possible metrics of constant curvature on a two-sphere: up to diffeomorphisms  and an overall scale factor, there is  only the round metric. In  the Finslerian setting,  Bryant and Shen \cite{Bryant1996,Shen2002} demonstrated that there exist infinitely many non-isometric Finsler metrics on $S^2$ with the same constant positive flag curvature. This remarkable non-uniqueness opens up a rich landscape of possibilities absent in the Riemannian case and  suggests that Finslerian angular sectors could encode additional physical degrees of freedom not present in general relativity. 
The choice of metrics with constant flag curvature is  strongly motivated  when one considers a Finslerian generalization of the Einstein's field equations  proposed by Rutz \cite{Rutz} (originally, just for vacuum solutions) and based on    the Finslerian Ricci scalar. In fact, the constant flag curvature condition gives  that  the Finslerian Ricci scalar of the angular metric is constant as well,  simplifying the field equation. 

Among metrics with constant flag curvature, Randers metrics --which can be understood as Riemannian metrics perturbed by a one-form-- are particularly appealing due to their connection with Zermelo's navigation problems and their relatively simple geodesic structure \cite{Robles2007}. Within this family  one finds both metrics with all geodesics closed  and   metrics admitting just two closed geodesics (the so-called {\em Katok} examples).
 
Our work has two main objectives. First, we provide a mathematically rigorous treatment of  null geodesics in static Lorentz--Finsler spacetimes with a Finsler angular sector (focusing in particular on Randers ones), carefully considering  how the anisotropic structure affects the conserved quantities. We show that while the condition for circular photon orbits (the photon sphere) remains formally identical to the Lorentzian case, the critical impact parameter acquires direction--dependent corrections that encode the Finslerian nature of the angular geometry. A  Sagnac--type effect, where co-rotating and counter-rotating photons on the photon sphere have different periods as measured by the observer $\partial_t$ is also described.

Second, we discuss some critical aspects of \cite{Nekouee2025}; as shown in Remark~\ref{comparison}, the {\em equatorial restriction} is not “without loss of generality’’ even  for Randers metrics. Moreover,   some results in \cite{Nekouee2025} overlook orientation effects due to non-reversibility. More important,  we show that the claimed new solutions in \cite{Nekouee2025} are  essentially identical to the hairy black hole solutions previously obtained by Ovalle et al. \cite{Ovalle2021} using gravitational decoupling in the Lorentzian setting.

The paper is organized as follows. In Section~\ref{LFR}, we establish the Lorentz--Finsler framework and introduce the specific form of static Finsler spacetimes with angular Randers  structure having constant flag curvature. Section~\ref{circular} examines circular geodesics and derives the generalized photon sphere condition; moreover we show how the Randers structure induces a Finslerian Sagnac effect. In Section~\ref{appendix}, we show that it is possible  to discriminate between a stationary Lorentzian and a static Lorentz--Finsler metric under shared  optical data, i.e. when the two metrics  provide the same Sagnac--type delay, by evaluating the proper time of a worldline in the two  geometries.   Section~\ref{FFE} discusses the Finslerian field equations and their simplification for constant flag curvature angular metrics. In particular, Subsection~\ref{nekouee} provides a critical analysis of  \cite{Nekouee2025} establishing proper attribution of results contained there. Finally, Section~\ref{end} presents our conclusions. 
\section{Lorentz--Finsler Randers framework}\label{LFR}
In a Finsler spacetime, the fundamental geometric object is not a metric tensor but rather a function $L: TM \to \mathbb{R}$ defined on the tangent bundle $TM$ of  the four dimensional manifold $M$. This function, which we call the {\em Lorentz--Finsler structure} or {\em metric}, must satisfy certain conditions to provide a coherent geometric framework.  It must be positively homogeneous of degree two in the fiber coordinates, meaning that $L(x, \lambda y) = \lambda^2 L(x, y)$ for any positive scalar $\lambda$, and for all $x\in M$, $y\in T_x M$. This property ensures that the notion of length for a causal curve, i.e. a curve $\gamma\colon I \to M$, $\gamma=\gamma(s)$ satisfying  $L(\gamma(s),\dot\gamma(s))\geq 0$, remains well-defined under  orientation preserving reparametrization, where 
the length of $\gamma$ is defined as $\int_I\sqrt{L(\gamma, \dot\gamma)}ds.$ 

The {\em fundamental tensor}, which plays a role analogous to the metric tensor in Riemannian geometry, is defined   through the relation 
\begin{equation}\label{fundtens}
		g_{\mu\nu}(x,y) = \frac{1}{2}\frac{\partial^2 L}{\partial y^\mu \partial y^\nu}(x,y),\quad \mu,\nu\in\{0,\ldots,3\}, \  \text{for all } (x,y)\in \mathcal A,
\end{equation} 
			where $\A\subset TM\setminus 0$ is an open subset where $L$ is smooth and such that $\pi(\A)=M$ ($\pi:TM\to M$ being the canonical projection) and, for each $v\in\A$, $\lambda v \in \A$ for all $\lambda>0$.			
			The crucial difference from Riemannian geometry is that $g$   depends not only on the position $x$ but also on the direction $y$, i.e. $g_{\mu\nu}$ is a section of the pullback bundle
			$\pi^*(TM\otimes TM)\to\mathcal A$, not a tensor on $M$.

We require $g(x,y)$ to be of Lorentzian signature type for all $(x,y)\in \A$.\footnote{We adopt signature $(+,-,-,-)$;  moreover, units with $G=c=1$ are used.}

The rich mathematical structure of Finsler geometry for {\em classical Finsler metrics}, i.e. when $L$ is positive and $g$ positive definite for any $(x,y)\in TM\setminus 0$, is presented comprehensively in the influential  text by Bao, Chern, and Shen \cite{Bao2000}. 

In this work,  we consider static Finsler spacetimes where the Finsler structure takes the form \begin{equation}\label{ansatz}
	L(t,r,\theta,\phi, y^t , y^r, y^\theta, y^\phi) = e^{\nu(r)}(y^t)^2  - e^{\vartheta(r)}(y^r)^2 - r^2\overline{F}^2(\theta,\phi,y^\theta,y^\phi),
	\end{equation} 
	where $y=(y^t,y^r,y^\theta,y^\phi)\in T_{x}M$, $x=(t,r,\theta,\phi)$. 
	This ansatz formally resembles a classical static spherically   symmetric configuration but includes  non-trivial Finslerian  effects through the {\em angular sector metric} $\overline{F}$ which  is a classical Finsler metric on $S^2$, i.e. it is non-negative, positively homogeneous of degree $1$ and its fundamental tensor (associated to $L=\overline F^2$) is positive definite for all $\theta,\phi,y^\theta,y^\phi$ in the slit tangent bundle $TS^2\setminus 0$.    

\bere\label{regularity}
For the static Lorentz--Finsler metric $L$ in \eqref{ansatz},
$\overline F^2$ is  $C^1$ on $TM$ but it is $C^2$ (or more regular) only on $TS^2\setminus\{0\}$. Accordingly, the fundamental tensor $g$ in \eqref{fundtens}
is well defined on the \emph{admissible set}
\[
\mathcal A\;:=\;\big\{(x,y)\in TM\setminus\{0\}\ :\ (y^\theta,y^\phi)\neq(0,0)\big\}.
\]
Along  directions $(y^t,y^r,0,0)$ the $(\theta,\phi)$–block of $g_{\alpha\beta}$ is not defined (the  $(t,r)$–block is). Whenever  some ``tensorial object'',  constructed by contraction with $g_{\mu\nu}(x,y)$ or/and  its inverse $g^{\mu\nu}(x,y)$, is  required at such directions, we agree to define it as a limit along direction in $\mathcal A$  (provided that the limit exists).
\ere
Metrics of the type \eqref{ansatz} were considered in \cite{LiChang2014};   they belong
to the larger class of static Lorentz--Finsler metrics  that  were first introduced  by \cite{LPH} and subsequently   studied in \cite{CapSta16} (see also \cite{CapSta18}). 
\subsection{Randers angular sector}
As we will see in Section~\ref{FFE}, a physically reasonable choice  of the angular metric $\overline{F}$ is to take one  with constant flag curvature (see \cite[\S 3.9]{Bao2000} for the definition of the flag curvature).  Unlike the Riemannian case,  there exist infinitely many non-isometric Finsler metrics on $S^2$ with the same constant positive flag curvature, as shown by Bryant and Shen \cite{Bryant1996, Shen2002}. In particular,  the family of Randers metrics with constant flag curvature $1$ in \cite{Shen2002} is constituted by non-locally projectively flat metrics.

 A Randers metric is a positive definite Finsler metric of the form $\overline{F} = \alpha + \beta$, where $\alpha$ is the norm of a Riemannian metric $a$ and  $\beta$ is a one-form with norm w.r.t. $a$ everywhere strictly less than $1$.
More explicitly, we have 
\begin{equation}\label{Randers}
			\begin{aligned}\alpha(\theta,\phi;y^\theta,y^\phi) &= \sqrt{a_{AB}(\theta,\phi)y^A y^B}, \\
			\beta(\theta,\phi;y^\theta,y^\phi) &= b_A(\theta,\phi)y^A,\qquad a^{AB}b_A b_B<1,
		\end{aligned}
		\end{equation}
where the indices $A,B$ run over the angular coordinates $\{\theta,\phi\}$. The physical interpretation of a Randers metric with constant positive flag curvature  becomes clearer when we consider   Zermelo navigation, where we imagine navigating on the round sphere in the presence of a ``wind'' field $W$ (see \cite{BaoRoblesShen2004, CJS14}). At each point on $S^2$, the wind modifies the indicatrix of the round metric by translating it; as a consequence the  distance function is modified as well in a direction--dependent way.

Let  $K>0$ and let us consider on $S^2$:
\[
h_K = \frac{1}{K}\big(d\theta^2 + \sin^2\theta \, d\phi^2\big), 
\qquad 
W = \varepsilon\,\partial_\phi, \qquad 0\leq \varepsilon<\sqrt{K}.
\]
Then $\|W\|_{h_K}^2 = \tfrac{\varepsilon^2}{K}\sin^2\theta$. Let us set
\beq\label{lambdak}
\lambda_K(\theta): = 1 - \frac{\varepsilon^2}{K}\sin^2\theta.
\eeq

For a tangent vector $y=y^\theta\partial_\theta+y^\phi\partial_\phi$, the 
Randers metric obtained by Zermelo navigation with data $h_K$ and $W$ is (see \cite[\S 1.1.2]{BaoRoblesShen2004}):
\beq\label{Randersphere}
F_K(\theta,\phi;y^\theta,y^\phi)
=\dfrac{1}{\sqrt K \lambda_K(\theta)}\sqrt{\,\lambda_K(\theta)\,(y^\theta)^2
		+\sin^2\theta\,(y^\phi)^2\,}
	-\dfrac{\varepsilon \sin^2\theta}{K\lambda_K(\theta)}y^\phi.
\eeq
Thus, as a particular case of \cite[Th. 5.1]{BaoRoblesShen2004} (see also \cite[Rmk. 3.1]{Shen2002})  we have:
\begin{proposition}
	Let $\overline F$  be a Randers metric on $S^2$ of constant positive flag curvature $K$.  
	Then, locally up to isometry, there exist a Killing vector field $X$ of the round metric on $S^2$ and a constant $\varepsilon\in [0,\sqrt K)$, such that in spherical coordinates $(\theta,\phi)$ adapted to the 
	$SO(2)$-action generated by $X$ (i.e. the fixed points lie at $\theta=0,\pi$ and   $X=\partial_\phi$),
	$\overline F$ can be written as in \eqref{Randersphere} with the wind $W=\varepsilon X$.
	
	Conversely, for any choice of $K>0$ and $\varepsilon\in\R$ with  $|\varepsilon|<\sqrt K$,  	 \eqref{Randersphere} defines  a Randers metric on $S^2$ of constant flag curvature $K$. 
\end{proposition}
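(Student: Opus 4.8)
The plan is to treat both directions through the Zermelo navigation dictionary between Randers metrics and navigation pairs $(h,W)$, with the classification of \cite[Th.~5.1]{BaoRoblesShen2004} supplying the curvature information; the only input beyond quoting that theorem is a compactness argument on $S^2$ promoting the homothetic field it produces to a Killing field. For the forward implication I would write $\bar F=\alpha+\beta$ as in \eqref{Randers} and pass to its navigation data $(h,W)$ via the standard bijection: with $\lambda:=1-a^{AB}b_Ab_B>0$ (positive since $\bar F$ is Randers) one sets $h_{AB}=\lambda\,(a_{AB}-b_Ab_B)$ and $W_A=-\lambda\,b_A$, $W^A=h^{AB}W_B$, and conversely this $(h,W)$ reproduces $\bar F$ through the Zermelo formula used to derive \eqref{Randersphere}. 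Because $\bar F$ has constant positive flag curvature $K$, Theorem~5.1 of \cite{BaoRoblesShen2004} applies: $h$ has constant sectional curvature and $W$ is an infinitesimal homothety, $\mathcal L_W h=2\sigma\,h$ for a constant $\sigma$, the flag curvature being fixed by $\sigma$ and that sectional curvature.

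At this point I would invoke compactness of $S^2$: taking the $h$--trace of $\mathcal L_W h=2\sigma h$ gives $\operatorname{div}_h W=2\sigma$, a constant, and integrating over the closed surface $S^2$ forces $\sigma=0$. Hence $W$ is Killing, the homothety correction in the Bao–Robles–Shen curvature relation vanishes, and the sectional curvature of $h$ equals $K$; by the Killing–Hopf theorem $h$ is isometric to the round metric $h_K$. A Killing field of the round sphere generates a one--parameter group of rotations, so after an isometry I may choose spherical coordinates adapted to its rotation axis, giving $W=\varepsilon\,\partial_\phi$ for a constant $\varepsilon$. The admissibility bound $\|W\|_h<1$ must hold everywhere; since $\|\partial_\phi\|_{h_K}^2=K^{-1}\sin^2\theta$ attains its maximum $K^{-1}$ at the equator, this reads $\varepsilon^2/K<1$, i.e.\ $|\varepsilon|<\sqrt K$, and after possibly reversing the orientation of $\phi$ one may take $\varepsilon\in[0,\sqrt K)$. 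Substituting $h=h_K$ and $W=\varepsilon\partial_\phi$ into the Zermelo formula reproduces \eqref{Randersphere}, which is the asserted normal form, valid on the coordinate chart only — whence the qualifier ``locally up to isometry''.

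The converse is the routine direction. Given $K>0$ and $|\varepsilon|<\sqrt K$, the metric $h_K$ is round of curvature $K$ and $W=\varepsilon\partial_\phi$ is one of its Killing fields, with $\|W\|_{h_K}^2=(\varepsilon^2/K)\sin^2\theta\le\varepsilon^2/K<1$ uniformly; thus the data $(h_K,W)$ is admissible and the Zermelo construction yields a genuine Randers metric (in particular strongly convex, as demanded in \eqref{Randers}). A direct substitution, using the definition \eqref{lambdak} of $\lambda_K$ together with $h(W,y)=K^{-1}\varepsilon\sin^2\theta\,y^\phi$ and $h(y,y)=K^{-1}\big((y^\theta)^2+\sin^2\theta\,(y^\phi)^2\big)$, shows this metric is precisely \eqref{Randersphere}. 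Since $W$ is Killing ($\sigma=0$), the Bao–Robles–Shen relation gives constant flag curvature equal to the sectional curvature $K$ of $h_K$, as claimed.

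I expect the main obstacle — really the only step needing care beyond bookkeeping — to be the upgrade from homothety to Killing: \cite[Th.~5.1]{BaoRoblesShen2004} produces an infinitesimal homothety $W$, and it is the divergence/compactness argument on $S^2$ that kills the homothety constant $\sigma$ and pins the sectional curvature of $h$ exactly to $K$. Secondary care is needed in tracking the normalization constants in the curvature relation, so that $\sigma=0$ yields $\kappa=K$ rather than a rescaled value, and in observing that it is the uniform bound $\sin^2\theta\le 1$ which makes $\|W\|_h<1$ hold on the whole sphere rather than merely pointwise.
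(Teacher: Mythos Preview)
Your argument is correct and is, in spirit, the paper's approach: the paper does not supply its own proof but presents the proposition as a particular case of \cite[Th.~5.1]{BaoRoblesShen2004} (with a pointer also to \cite[Rmk.~3.1]{Shen2002}). Your proposal unpacks what that citation entails, and you correctly isolate the one step requiring care beyond bookkeeping---the compactness/divergence argument on $S^2$ forcing the homothety constant $\sigma$ to vanish, so that $W$ is genuinely Killing and the sectional curvature of $h$ is pinned to $K$; the paper leaves this implicit in the phrase ``particular case''.
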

\begin{remark}[Symmetries of Randers spheres with constant flag curvature]
	\label{rem:axial}
	According to the classification in \cite{BaoRoblesShen2004}, any strongly convex Randers metric $\overline F$ on $S^2$ with constant positive flag curvature $K$ is obtained (up to local isometry) via Zermelo navigation on the round sphere $(S^2,h_K)$ under the influence of a Killing wind $W$ satisfying $\|W\|_{h_K}<1$.
	
	The symmetries of the Randers metric are precisely the background isometries that leave the wind invariant. Specifically, from \cite[Lemma 1.2]{BaoRoblesShen2004}, the isometry group and its Lie algebra are given by:
	\begin{equation}\label{isomkill}
	\mathrm{Isom}_0(\overline F)=\{\varphi\in\mathrm{Isom}(h_K):\ \varphi_*W=W\},
	\qquad
	\mathfrak{kill}(\overline F)=\{X\in\mathfrak{so}(3):\ [X,W]=0\}.
	\end{equation}
	We can, without loss of generality, align the spherical coordinate system so that $W$ coincides with the azimuthal generator (i.e., we rotate the frame such that the axis of rotation generated by $W$ coincides with the standard $z$-axis). 	In this adapted frame, \eqref{isomkill}  reduce to $\mathrm{Isom}_0(\overline F)\cong \mathrm{SO}(2)$ and $\mathfrak{kill}(\overline F)=\operatorname{span}\{W\}$.
	Besides this,  the  equatorial reflection $(\theta,\phi)\mapsto(\pi-\theta,\phi)$  also preserves $W$; hence the  full isometry group is  $\mathrm{Isom}(\overline F)=\mathrm{SO}(2)\times\mathbb Z_2$. 
	
	Consequently, the  Lorentz--Finsler metric in \eqref{Randersphere} is static and axially symmetric. It recovers the full spherical symmetry  only in the limit of vanishing wind, $\varepsilon \to 0$.
	\end{remark}

\section{Circular geodesics, critical impact parameter and a  Sagnac--type effect}\label{circular}
Any geodesic $\gamma=\gamma(s)$ of $L$, i.e. any critical point   of the functional $\gamma: I\subset \R\to M\mapsto \int_IL(\gamma,\dot\gamma)\de s$, defined on the path space of sufficiently regular curves connecting two fixed points on $M$, satisfies  the conservation law:
\begin{equation*}
	L(\gamma,\dot\gamma) = \kappa
\end{equation*}
where $\kappa$ is a constant which is, by definition, $0$ for {\em null} geodesics and positive  (resp. negative)
for {\em timelike}  (resp. {\em spacelike}) ones  (see  \cite[Th. 2.13]{CapSta16}).

Let us denote by $E$ the Noether charge associated with the Killing vector field $\partial_t$ of the Lagrangian $\frac 12 L$ in \eqref{ansatz}:
\begin{equation}\label{E}
	E(x,y):=\partial_y (\tfrac 12 L)(x,y)[\partial_t]=e^{\nu}y^t .
\end{equation}
It is well known that  along  a geodesic $\gamma=\gamma(s)$, $E\big(\gamma(s), \dot\gamma(s)\big)$ is also constant (see, e.g., \cite[Rmk. 2.4]{capcor}); such a constant will be simply denoted   with $E$.
We note that for {\em causal geodesics} (i.e. the ones with $\kappa\geq 0$) $E\neq 0$ and therefore we can divide them in {\em future--pointing} and {\em past--pointing} ones according to $E>0$ or $E<0$ respectively.

The Euler–Lagrange equation
$\frac{d}{ds}(\partial_{y^r}L)-\partial_r L=0$
and the assumption $\partial_r\overline F^2=0$ yield for a geodesic  with $r$--component constant 
\beq\label{rconst}
\dot r=0,\qquad \frac12\,e^{\nu(r)}\,\nu'(r)\,\dot t^2=r\,\overline F^2(\theta,\phi,\dot \theta,\dot \phi).
\eeq
\subsection{Photon sphere}
We now derive the  condition that null circular orbits must satisfy without restricting the analysis to equatorial geodesics $\theta=\frac{\pi}{2}$.
\begin{proposition}\label{circulargeo}
Let $(M,L)$ be a Finsler spacetime with $L$ as in \eqref{ansatz}.  If a null geodesic has constant $r$--component $r(s)\equiv r_C$ (a circular photon orbit), then $r_C$ satisfies
\begin{equation}\label{photonsphere}
r_C\,\nu'(r_C)=2.
\end{equation}
Moreover, along such an orbit one has the constraint
\begin{equation}\label{null}
E^2 e^{-\nu(r_C)} \;=\; r_C^2\,\overline F^2\big(\theta(s),\phi(s), \dot\theta(s),\dot \phi(s)\big)\,.
\end{equation}
Conversely, if $r_C$ obeys \eqref{photonsphere} and there exists a non-constant $\overline F$--geodesic on $(S^2,\overline F)$ with constant $\overline F$--speed such that \eqref{null} holds for some $E> 0$, then its lifts with $r\equiv r_C$ and $\dot t=\pm E e^{-\nu(r_C)}$ are, respectively, a future--pointing and a past--pointing null circular geodesic.
\end{proposition}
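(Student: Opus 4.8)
The plan is to treat the two implications separately, using throughout the first integrals already in hand: the energy relation \eqref{kappa} (with $\kappa=0$ in the null case), the Noether charge \eqref{E}, and the radial relation \eqref{rconst}.

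For the forward implication I would start from a null geodesic with $r(s)\equiv r_C$. By \eqref{rconst} one has $y^r=0$, so the null condition $L=0$ in \eqref{ansatz} collapses to $e^{\nu(r_C)}(y^t)^2=r_C^2\bar F^2$; rewriting $y^t=E\,e^{-\nu(r_C)}$ via \eqref{E} gives exactly \eqref{null}. I would then observe that $\bar F\neq 0$ along the orbit: otherwise the null condition would force $y^t=0$, and positive definiteness of $\bar F$ would give $(y^\theta,y^\phi)=(0,0)$, i.e.\ $\dot\gamma=0$, impossible for a geodesic. With $\bar F^2\neq 0$ secured, I would combine the second relation in \eqref{rconst}, namely $\tfrac12\nu'(r_C)e^{\nu(r_C)}(y^t)^2=r_C\bar F^2$, with the null identity $e^{\nu(r_C)}(y^t)^2=r_C^2\bar F^2$ just obtained; substituting the latter into the former and cancelling $\bar F^2$ yields $\tfrac12 r_C^2\nu'(r_C)=r_C$, which is \eqref{photonsphere}.

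For the converse, given $r_C$ satisfying \eqref{photonsphere} and a non-constant constant-speed $\bar F$-geodesic $(\theta(s),\phi(s))$ for which \eqref{null} holds with some $E>0$, I would define the candidate lift by $r\equiv r_C$, $y^t=\pm E\,e^{-\nu(r_C)}$ (with $t(s)$ obtained by integration) and the prescribed angular motion, and verify that it is a null geodesic by checking the Euler--Lagrange system of $\tfrac12 L$ coordinate by coordinate. The $t$-equation is immediate since $\partial_{y^t}(\tfrac12 L)=E$ is constant and $L$ is $t$-independent; nullity $L=0$ follows directly from \eqref{null}, where constancy of the $\bar F$-speed guarantees that \eqref{null} persists along the whole orbit once it holds at one point. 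For the $r$-equation, $y^r\equiv 0$ kills the $\tfrac{\de}{\de s}$ term, and $\partial_r(\tfrac12 L)=\tfrac12\nu'(r_C)e^{\nu(r_C)}(y^t)^2-r_C\bar F^2$ vanishes after inserting $e^{\nu(r_C)}(y^t)^2=E^2e^{-\nu(r_C)}=r_C^2\bar F^2$ and using \eqref{photonsphere}. Finally, the sign of $y^t$ sets the sign of the conserved charge $E=e^{\nu}y^t=\pm E$, hence the future/past labelling.

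The crux---and the step I expect to require the most care---is the angular pair of Euler--Lagrange equations. Since in \eqref{ansatz} only the term $-r^2\bar F^2$ carries $(\theta,\phi,y^\theta,y^\phi)$-dependence and $r\equiv r_C$ is constant, the $\theta$- and $\phi$-equations for $\tfrac12 L$ reduce, after factoring out the constant $-\tfrac12 r_C^2$, exactly to the Euler--Lagrange equations of the energy functional $\int\bar F^2\,\de s$ on $S^2$. These are precisely the equations satisfied by a constant-speed $\bar F$-geodesic, so the angular motion lifts without modification. Here I would invoke the regularity bookkeeping of Remark~\ref{regularity}: non-constancy of the geodesic together with constant $\bar F$-speed forces $(y^\theta,y^\phi)\neq(0,0)$, so the whole orbit lies in the admissible set $\mathcal A$ where $\bar F^2$ is $C^2$ and the fundamental tensor is well defined, legitimising all the differentiations above.
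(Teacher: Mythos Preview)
Your proposal is correct and follows essentially the same approach as the paper's proof: derive \eqref{null} from the null constraint and the Noether charge \eqref{E}, then substitute into \eqref{rconst} to obtain \eqref{photonsphere}; for the converse, observe that the Euler--Lagrange equations in $(\theta,\phi)$ reduce (after factoring out $-\tfrac12 r_C^2$) to the $\bar F$--geodesic equations on $S^2$. You spell out a few points the paper leaves implicit---the $\bar F\neq0$ justification before dividing, the coordinate-by-coordinate verification of the Euler--Lagrange system for the lift, and the appeal to Remark~\ref{regularity} for regularity---but the logical skeleton is identical.
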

\begin{proof}
	Recalling that the Noether charge \eqref{E} is conserved along geodesics,  the time component of the velocity satisfies  \begin{equation}\label{dott}
		\dot{t} = E e^{-\nu(r)}.
	\end{equation}
	For a circular null geodesic (where $L=0$, $\dot{r}=0$, and $r=r_C$), substituting this expression for $\dot{t}$ into the null constraint $e^{\nu(r)}\dot{t}^2 - r^2 \overline{F}^2 = 0$ yields:
	\begin{equation*}
		E^{2}e^{-\nu(r_C)}=r_C^{2}\overline{F}^{2}.
	\end{equation*}
	Substituting this into \eqref{rconst} gives $r_C \nu'(r_C)=2$. This shows the radius is purely determined by the Lorentzian sector associated with coordinates $(t,r)$ and not by $\overline{F}$. With $r \equiv r_C$, the remaining equations in $\theta$ and $\phi$ in the Euler-Lagrange system reduce to the geodesic equations of the Finsler metric $\overline{F}$ on $S^2$ (with constant $\overline{F}$-speed fixed by (10)). The converse follows analogously as $L$ depends on $(\theta,\phi,\dot{\theta},\dot{\phi})$ only through the Finsler metric $\overline{F}$ and \eqref{dott} implies that $\dot t$ is constant if $r$ is constant.
\end{proof}
\begin{remark}
Equation \eqref{photonsphere} is the same as in the spherical symmetric Lorentzian setting. The role of $\overline F$ is twofold: (i) it fixes the  curves on $S^2$ (the $\overline F$--geodesics), and (ii) it enters the \emph{impact parameter} via \eqref{null} when an additional axial symmetry is present (see \S~\ref{subsec:axial-charge}).

Let us see some explicit examples:
\begin{itemize}
	\item \textit{Schwarzschild-like metric:} for the classic case $e^{\nu(r)} = 1 - \frac{2M}{r}$, we have $\nu'(r) = \frac{2M}{r(r-2M)}$. The condition becomes $\frac{2M}{r_C-2M} = 2$, which yields the unique physically relevant solution $r_C = 3M$.
	\item \textit{Power-law metric:} consider the case $e^{\nu(r)} = C r^k$ (so that $\nu(r) = k \ln r + \ln C$). Here, the condition holds identically if  $k=2$.  This implies that \textit{every} radius $r$ admits circular null geodesics.
		If $k \neq 2$, the condition is never satisfied, and no circular null orbits exist.
		\item \textit{Linear exponent:} if $\nu(r) = kr$ (so the metric function grows exponentially $e^{\nu(r)} = e^{kr}$), the condition becomes $kr_C = 2$, yielding a unique solution  at $r_C = 2/k$.
\item \textit{Finslerian hairy black hole with flag curvature $K$:} Consider the hairy black hole solution discussed in Subsection~\ref{nekouee} which includes the flag curvature $K$ and a primary hair parameter equal to $\alpha \ell$:
\begin{equation*}
	e^{\nu(r)} = 1 - \frac{2M+\alpha\,\ell/K}{K\,r} + \alpha\,e^{-K r/M},
	\end{equation*}
(see \eqref{FHSchmetric} below).
The  condition $r_C \nu'(r_C) = 2$ is equivalent to  $r_C \frac{d}{dr}(e^{\nu(r)})|_{r=r_C }= 2 e^{\nu(r_C)}$, which in turn gives:
\begin{equation}\label{rC}
	\frac{3(2M + \alpha\,\ell/K)}{K\,r_C} - 2 = \alpha\,e^{-K r_C/M} \left( 2 + \frac{K\,r_C}{M} \right).
\end{equation}
This generalizes the Schwarzschild condition (recovered when $\alpha= 0$ and $K=1$). 
 The left-hand side in \eqref{rC} is a monotonically decreasing function from $+\infty$ (at $r=0$) to $-2$, while the right-hand side  is a bounded  function  that tends  monotonically to zero as $r\to +\infty$. Consequently, the two curves intersect at exactly one point $r_C$.
\end{itemize}

\end{remark}
\begin{remark}[Closed geodesics on $(S^2,\overline F)$]\label{lightring}
	Proposition~\ref{circulargeo}
	 does not require that the  projection of a null geodesic on $(S^2,\overline F)$ be closed. When $\overline F$ is given by 
	\eqref{Randersphere}, we effectively have  an  equatorial photon ring $\mathcal C:=\{r=r_C, \ \theta=\frac{\pi}{2}\}$, where $r_C$ is a solution of \eqref{photonsphere}.
	Away from the equator  the  projection on the photon sphere $r=r_C$ depends on the geodesic flow of $(S^2,\overline F)$ which  can be described in terms of the geodesics of the scaled round metric $h_K$ and the Killing vector field $W$, see \cite[Th. 2]{Robles2007}.
	Since  the reflection around the equator is an isometry   for  $\overline F_K$ in  \eqref{Randersphere},  $\mathcal C$ is  a light ring which is associated with   two opposite parametrized closed geodesics of $\overline F_K$. These geodesics have different $\overline F_K$--length and indeed they give rise to a Sagnac--type effect, see \S~\ref{sagnac}.
	\begin{proposition}[Equatorial closed geodesic for constant flag curvature Randers spheres]\label{equageos}
		Let $\overline F$ be a strongly convex Randers metric on $S^2$ with constant positive flag curvature $K$. Up to an $h_K$–isometry, choose spherical coordinates $(\theta,\phi)$ adapted to the symmetry axis so that $W=\varepsilon \partial_\phi$ (see Remark~\ref{rem:axial}). Then the equator $\{\theta=\tfrac{\pi}{2}\}$ is the support of two closed $\overline F$–geodesics with opposite parametrizations.
	\end{proposition}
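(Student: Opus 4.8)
The plan is to avoid integrating the geodesic equations entirely and instead exploit the \emph{equatorial reflection isometry} together with uniqueness of geodesics for a strongly convex Finsler metric. Write $\iota\colon(\theta,\phi)\mapsto(\pi-\theta,\phi)$. By Remark~\ref{rem:axial} this map preserves both the scaled round metric $h_K$ and the wind $W$, so it is the $\mathbb{Z}_2$ factor of $\mathrm{Isom}(\bar F)$ and, in particular, an isometry of $\bar F$. Its fixed–point set is exactly the equator $\mathcal C_0:=\{\theta=\tfrac\pi2\}$, and along $\mathcal C_0$ the differential $d\iota$ fixes $\partial_\phi$ and sends $\partial_\theta\mapsto-\partial_\theta$. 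These two observations are all the structure I need.

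First I would fix $p_0\in\mathcal C_0$ and let $\gamma_+$ be the $\bar F$–geodesic with $\gamma_+(0)=p_0$ and $\dot\gamma_+(0)=\partial_\phi$ (tangent to $\mathcal C_0$). Since $\iota$ is an isometry, $\iota\circ\gamma_+$ is again an $\bar F$–geodesic, and its initial data are $\big(\iota(p_0),\,d\iota(\partial_\phi)\big)=(p_0,\partial_\phi)$, i.e.\ exactly those of $\gamma_+$. Strong convexity of $\bar F$ makes the geodesic system a well–posed second–order ODE, so a geodesic is uniquely determined by its initial point and velocity; hence $\iota\circ\gamma_+=\gamma_+$. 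This forces $\theta(s)=\pi-\theta(s)$ for all $s$, i.e.\ $\theta\equiv\tfrac\pi2$, so $\gamma_+$ is confined to $\mathcal C_0$. Evaluating \eqref{Randersphere} at $\theta=\tfrac\pi2$ reduces the constant–$\bar F$–speed condition to $\dot\phi=\mathrm{const}$, so $\gamma_+$ wraps the compact circle $\mathcal C_0$ periodically and is a closed $\bar F$–geodesic.

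Running the same argument with $\dot\gamma_-(0)=-\partial_\phi$ produces a second genuine \emph{forward} $\bar F$–geodesic $\gamma_-$ supported on $\mathcal C_0$ but traversed in the opposite sense; I generate $\gamma_-$ directly because, for a non–reversible metric, the backward reparametrization of $\gamma_+$ need not be a geodesic. A short computation from \eqref{Randersphere} gives $\bar F$–lengths $\tfrac{2\pi}{\sqrt K+\varepsilon}$ and $\tfrac{2\pi}{\sqrt K-\varepsilon}$ for the two orientations, so $\gamma_+$ and $\gamma_-$ coincide as point sets but are distinct as oriented closed geodesics whenever $\varepsilon\neq0$ — this is precisely the asymmetry underlying the Sagnac–type effect of \S\ref{sagnac}.

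The step I expect to be the heart of the matter is the confinement claim: a geodesic launched tangentially to the fixed–point set of an isometric involution stays in that set. This is where strong convexity is indispensable, since it delivers both short–time existence and, crucially, uniqueness of the geodesic with prescribed initial data, without which the identity $\iota\circ\gamma_+=\gamma_+$ could not be drawn. Everything else — that $\iota$ is an isometry, and that a constant–speed curve confined to a circle must close up — is either quoted from Remark~\ref{rem:axial} or immediate. As an independent cross–check I would note the Zermelo route via \cite[Th.~2]{Robles2007}: the equatorial great circle is an $h_K$–geodesic and the flow of $W=\varepsilon\partial_\phi$ maps the equator to itself, so the associated $\bar F$–geodesic remains equatorial; I would keep this as a confirmation rather than the main line of proof.
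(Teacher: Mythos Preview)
Your proof is correct and follows essentially the same approach as the paper: the equatorial reflection $\iota$ is an $\bar F$--isometry, so a geodesic launched tangentially to its fixed set $\{\theta=\tfrac\pi2\}$ must coincide with its own reflection by uniqueness of geodesics, hence stays on the equator; repeating with $-\partial_\phi$ gives the second orientation. Your write-up is in fact more detailed than the paper's (you make the role of strong convexity in uniqueness explicit, compute the two $\bar F$--lengths, and note the Robles cross-check), but the skeleton is identical.
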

	\begin{proof}
		Let $\mathcal R:(\theta,\phi)\mapsto(\pi-\theta,\phi)$ be the equatorial reflection. Since $\mathcal R$ is an $h_K$–isometry and $\mathcal R_*W=W$,  $\mathcal R$ is a $\overline F$–isometry by \cite[Lemma 1.2]{BaoRoblesShen2004}. Take $p$ on the equator and $v\in T_pS^2$ tangent to the equator and collinear to  $\partial_\phi$. Let $\gamma$ be the $\overline F$–geodesic with initial data $(p,v)$. Because $\mathcal R$ is an isometry, $\mathcal R\circ\gamma$ is a $\overline F$–geodesic with the same initial data (note that $d\mathcal R_p v=v$). By uniqueness, $\mathcal R\circ\gamma=\gamma$, so $\gamma$ lies in the fixed set of $\mathcal R$, i.e. the equator. Since the equator is a circle, $\gamma$ is closed. Replacing $v$ by $-v$ gives the opposite orientation, which is again tangent at $p$ and the same argument applies.
	\end{proof}
Propositions~\ref{circulargeo} and \ref{equageos} immediately give the following:
\begin{corollary}[Equatorial light ring]\label{2closed}
		For the static Lorentz--Finsler metric $L$ in \eqref{ansatz} with $\overline F$ as in \eqref{Randersphere}  any solution $r_C$ of $r_C\nu'(r_C)=2$ yields a photon ring $\mathcal C:=\{r=r_C,\ \theta=\tfrac{\pi}{2}\}$ consisting of the projection of two null circular future--pointing geodesics of $\overline F$ distinguished by the sign of $\dot\phi$.
	\end{corollary}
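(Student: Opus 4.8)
The plan is to read the corollary as a direct application of the converse half of Proposition~\ref{circulargeo} to the two equatorial geodesics produced by Proposition~\ref{equageos}; the whole content is to verify that the hypotheses of that converse are met along the equator.

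First I would fix a solution $r_C$ of the photon sphere equation $r_C\nu'(r_C)=2$, which is precisely \eqref{photonsphere}. By Proposition~\ref{equageos} the equator $\{\theta=\pi/2\}$ supports two closed $\bar F$-geodesics $\gamma_\pm$ carrying opposite parametrizations. Each $\gamma_\pm$ is non-constant and, being a Finsler geodesic, has constant $\bar F$-speed (the analogue of \eqref{kappa} for the Lagrangian $\bar F^2$), so $\bar F^2(\gamma_\pm,\dot\gamma_\pm)$ is a positive constant along the curve. I would then solve the null constraint \eqref{null} for the conserved charge by setting $E_\pm:=r_C\,e^{\nu(r_C)/2}\,\bar F(\gamma_\pm,\dot\gamma_\pm)>0$, which makes \eqref{null} hold identically along each orbit.

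Before invoking the converse I would confirm that the relevant directions lie in the admissible set $\mathcal A$ of Remark~\ref{regularity}: on the equator the velocity is collinear with $\partial_\phi$, hence $(y^\theta,y^\phi)=(0,y^\phi)$ with $y^\phi\neq0$, so $(y^\theta,y^\phi)\neq(0,0)$, and since $\sin\theta=1$ there the metric $F_K$ in \eqref{Randersphere} is smooth at these points. The converse in Proposition~\ref{circulargeo} then lifts each $\gamma_\pm$ to a geodesic of $L$ with $r\equiv r_C$ and $y^t=E_\pm e^{-\nu(r_C)}>0$, i.e. a future-pointing null circular geodesic whose projection is $\mathcal C$; the two are distinguished by the sign of $\dot\phi$ inherited from the opposite orientations of $\gamma_\pm$.

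The one point that needs care — and is exactly the origin of the Sagnac-type effect treated in \S~\ref{sagnac} — is the non-reversibility of $\bar F$ for $\varepsilon\neq0$: in general $\bar F(\gamma_+,\dot\gamma_+)\neq\bar F(\gamma_-,\dot\gamma_-)$, so $E_+\neq E_-$. This is not an obstruction, since the corollary only asserts the existence of the two oriented future-pointing orbits and allows each its own energy; it does not claim a shared value of $E$.
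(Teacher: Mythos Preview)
Your proposal is correct and follows exactly the route the paper intends: the corollary is stated immediately after Propositions~\ref{circulargeo} and~\ref{equageos} with the remark that it follows directly from them, and you have simply unpacked that implication---feeding the two equatorial $\bar F$-geodesics of Proposition~\ref{equageos} into the converse of Proposition~\ref{circulargeo} and choosing $E_\pm>0$ so that \eqref{null} holds. The check that the equatorial directions lie in the admissible set $\mathcal A$ and the observation that $E_+\neq E_-$ (hence no obstruction) are accurate and add clarity beyond what the paper spells out.
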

\end{remark}
\subsection{Axial Noether charge and critical impact parameter}\label{subsec:axial-charge}
For any Finsler metric $\overline F$ admitting a Killing vector field $X$, the Noether charge of $\overline L :=\frac 12 \overline F^2$ associated with $X$ is given by
\begin{equation*}
	\overline{\mathcal J}_X(x,y):= \frac{\partial \overline L}{\partial y^A}(x,y)X^A
	= \overline F(x,y)\frac{\partial \overline F}{\partial y^A}(x,y)X^A,
\end{equation*}
see, e.g., \cite{capcor}.
For a Randers metric $\overline F=\alpha+\beta$ with $\alpha$ and $\beta$ in \eqref{Randers}, one has
\begin{equation*}
	\frac{\partial \overline F}{\partial y^A}= \frac{1}{\alpha}\,a_{AB}y^B + b_A,
\end{equation*}
hence the constant of motion along a geodesic $\sigma$ is given by
\begin{equation*}
	\overline{\mathcal J}_X(\gamma, \dot\gamma)= \overline F(\sigma,\dot\sigma)\Big(\frac{a(\dot\sigma,X)}{\alpha(\sigma,\dot\sigma)}+\beta(X)\Big)
\end{equation*}
(as for the Noether charge $E$,  we will simply  denote such a geodesic--dependent constant with $\overline{\mathcal J}$).  For the axial symmetry $X=\partial_\phi$ of  Randers metric  $\overline F_K$ in \eqref{Randersphere} and for any  geodesic $\sigma(s)=(\theta(s),\phi(s))$, we then obtain:
\begin{equation}\label{Jbar}
	\overline{\mathcal J} = \overline F_K(\sigma,\dot\sigma)\Big(\frac{a_{\phi\phi}\dot\phi}{\alpha(\sigma,\dot\sigma)}+b_\phi\Big).
\end{equation}
where 
\begin{equation*}
	\alpha(\sigma,\dot\sigma)=\frac{1}{\sqrt{K}\,\lambda_K}\sqrt{\lambda_K\,\dot\theta^{\,2}+\sin^2\!\theta\,\dot\phi^{\,2}},\qquad a_{\phi\phi}=\dfrac{\sin^2\!\theta}{K\,\lambda_K^{2}},\qquad
	b_\phi=-\frac{\varepsilon\,\sin^2\!\theta}{K\,\lambda_K},
	\end{equation*}
and $\lambda_K$ 	is in \eqref{lambdak}.
On the photon sphere $r=r_C$ the null constraint  gives
\begin{equation*}
	e^{\nu(r_C)}\,\dot t^{\,2}=r_C^{\,2}\,\overline F^2_K(\theta,\phi;\dot\theta,\dot\phi).
\end{equation*}
By parametrizing $\sigma$ with the arc length 	$\overline F_K\big(\theta,\phi, \dot\theta,\dot \phi\big)\equiv 1$
and recalling that $E=\dot te^{\nu(r(s))}$, we then get 
\[  E=r_C\sqrt{e^{\nu(r_C)}}.\]
 For the full spacetime Lagrangian $L$ and the circular null geodesic $\gamma$ associated with  $\sigma$ (recall Proposition~\ref{circulargeo}) we have
\begin{equation}\label{J}
	\mathcal J =\frac{\partial (\tfrac12 L)}{\partial y^\phi}(\gamma,\dot\gamma)
	= -r^2_C\,\overline F_K(\sigma,\dot\sigma)\frac{\partial \overline F_K}{\partial y^\phi}(\sigma,\dot\sigma)
	= -r^2_C\overline{\mathcal J}.
	\end{equation}
Thus the {\em  impact parameter} $b:=\frac{|\mathcal J|}{E}$ at the photon sphere (the so-called {\em critical impact parameter})
is equal to 
\begin{equation}\label{bCany}
	b_C= \frac{r_C\sin^2\theta}{K\lambda_K\sqrt{e^{\nu(r_C)}}}\Big|\frac{\dot \phi}{\lambda_K\alpha(\sigma,\dot\sigma)}-\varepsilon\Big|.
\end{equation}
For the two equatorial geodesics at  $\theta=\tfrac{\pi}{2}$, recalling that $0\leq \varepsilon<\sqrt{K}$,  one then has 
\begin{equation}\label{bC}
	b_C^\pm=\frac{r_C}{K\lambda^*_K\sqrt{e^{\nu(r_C)}}}\Big|\sqrt{K}\mathrm{sgn}(\dot\phi)-\varepsilon\big|=
\frac{r_C}{K\lambda^*_K\sqrt{e^{\nu(r_C)}}}\Big(\sqrt{K}\mp\varepsilon\big),
\end{equation}
where $\lambda_K^*:=1-\varepsilon^2/K$, with  the minus (resp. plus) sign  corresponding to the  co-rotating (resp.    counter-rotating) geodesic, $\dot\phi>0$  (resp. $\dot\phi<0$). 
\begin{remark}
	\label{comparison}
	Lorentz--Finsler metrics of the type \eqref{ansatz} with $\overline F$ given by \eqref{Randersphere} have been considered recently in \cite{Nekouee2025}. We clarify two critical points regarding their analysis:
	\begin{enumerate}
		\item \textit{Symmetry assumptions:} the authors state that the spacetime is ``spherically symmetric'' and simplify the equations by focusing on the equatorial plane $\theta=\pi/2$. We emphasize that the Finslerian modification breaks the full $O(3)$ spherical symmetry to the axial group $SO(2)$ (see Remark~\ref{rem:axial}). Consequently, generic geodesics are non-planar. The restriction to the equator is a valid particular case only because it is a totally geodesic submanifold invariant under the discrete reflection $\theta \to \pi-\theta$, not because of spherical symmetry (see Proposition~\ref{equageos}).
		\item \textit{Reversibility and impact parameter:} crucially, the authors do not account for the fact that $\overline F$ is not reversible. 
		Their expression for angular momentum on the plane $\theta=\pi/2$  is $\mathcal{J} = r^2 (1+\epsilon)^{-2} \dot{\phi}$ (see  Eq. (82), where the authors assume $K=1$). This formula is correct \emph{only} for  co-rotating geodesics ($\dot{\phi}>0$).
		For  counter-rotating ones ($\dot{\phi}<0$), recalling that $\dot{\phi} = -(1-\epsilon)$ in the arc-length parametrization of $F_K$, the correct relation derived from \eqref{Jbar}--\eqref{J} is:
		\[
		\mathcal{J} = r^2 (1-\epsilon)^{-2} \dot{\phi}.
		\]
		By using a single relation, reference \cite{Nekouee2025} misses in Eq. (88) the splitting of the critical impact parameter into two distinct values $b_C^\pm$, see  \eqref{bC}.
	\end{enumerate}
\end{remark}

\subsection{Orbit equation}
While the general Finslerian spacetime lacks full spherical symmetry, the existence of the reflection isometry  ensures that the equatorial plane $\theta = \pi/2$ is a totally geodesic submanifold. We restrict our dynamical analysis to this invariant plane to derive an orbit equation.

On the equator $\theta=\pi/2$, \eqref{Jbar} and \eqref{J} (with $r=r(\tau)$ instead of the constant value $r_C$) give:  
\begin{equation}\label{JH}
	\mathcal{J} = -r^2 (\dot{\phi} \mathcal{H}) \mathcal{H} = -r^2 \dot{\phi} \mathcal{H}^2,
\end{equation}
where
\[\mathcal{H}:= \frac{\text{sgn}(\dot{\phi})\sqrt{K} - \epsilon}{K \lambda_K^*}.\]
Solving for $\dot{\phi}$ and substituting it back into the expression for $\overline{F}_K$ allows us to find the squared angular metric in terms of $\mathcal{J}$:
\begin{equation}
	\label{F_squared}
	\overline{F}_K^2 = (\dot{\phi} \mathcal{H})^2 =  \frac{\mathcal{J}^2}{r^4} \frac{1}{\mathcal{H}^2} = \frac{\mathcal{J}^2}{r^4} \left( \frac{K \lambda_K^*}{\sqrt{K} \mp \epsilon} \right)^2.
\end{equation}
Using $E = e^{\nu(r)}\dot{t}$, the null condition  and  \eqref{F_squared}, we then get
\begin{equation*}
	e^{\vartheta(r)} \dot{r}^2 = E^2 e^{-\nu(r)} - r^2 \overline{F}_K^2 = E^2 e^{-\nu(r)} - \frac{\mathcal{J}^2}{r^2} \left( \frac{K \lambda_K^*}{\sqrt{K} \mp \epsilon} \right)^2,
\end{equation*}
thus by  $\dot{r} = (dr/d\phi)\dot{\phi}$ and \eqref{JH}, we obtain
\begin{equation*}
	\left(\frac{dr}{d\phi}\right)^2 = \frac{r^4\mathcal H^4}{e^{\nu(r)+\vartheta(r)}} \left[ \frac{1}{b^2} - \frac{e^{\nu(r)}}{r^2} \left( \frac{K \lambda_K^*}{\sqrt{K} \mp \epsilon} \right)^2 \right].
\end{equation*}
The turning points of the orbit must satisfy $dr/d\phi = 0$. Setting the term in the square brackets to zero, we find the impact parameter required to reach a turning point at $r_{tp}$ must satisfy
\begin{equation*}
	b = \frac{r_{tp}}{K \lambda_K^* \sqrt{e^{\nu(r_{tp})}}} \left( \sqrt{K} \mp \epsilon\right).
\end{equation*}
in accordance to  the critical impact parameter derived in \eqref{bC}.

\subsection{A Finslerian  Sagnac--type effect}\label{sagnac}
The time delay between different images in a lensing system also receives Finslerian corrections. Since co-rotating and counter-rotating paths have different effective metrics due to non-reversibility of $L$, the travel time for light differs depending on the orientation of the path around the lensing mass leading to a gravitational Sagnac--type effect. Let us describe it in detail.

For null future--pointing curves projecting on the photon sphere $r=r_C$, we have
\begin{equation*}
	\dot t=\frac{r_C}{\sqrt{e^{\nu(r_C)}}}\,\overline F_K(\theta,\phi, \dot \theta,\dot \phi),
\end{equation*} 
so that along the two equatorial closed geodesics of $\overline F_K$, $\sigma^\pm$,  in Proposition~\ref{equageos} at the photon sphere $r_C$, the lapse of flight for the $t$ coordinate  is
\begin{equation*}
	T_\pm=\frac{r_C}{\sqrt{e^{\nu(r_C)}}}\int_0^{2\pi}\!\Big(\alpha(\dot\sigma^\pm )+ \beta(\dot\sigma^\pm)\Big)\,ds.
	\end{equation*} 
	Hence the \emph{Sagnac--type} time delay between co- and counter-rotating equatorial loops at $r=r_C$ as measured by the observer $\partial_t/\sqrt{e^{\nu(r_C)}}$ is 
\begin{equation*}
	|\Delta \tau|:=\frac{1}{\sqrt{e^{\nu(r_C)}}}|T_+-T_-|=\frac{2r_C}{e^{\nu(r_C)}}\Big|\int_{\sigma^+}\!\beta\Big|
	=\frac{4\pi r_C}{e^{\nu(r_C)}}\frac{\varepsilon}{K\lambda_K^*}= \frac{4\pi r_C}{e^{\nu(r_C)}} \frac{\varepsilon}{K-\varepsilon^2}.
\end{equation*}

Note that, as shown in~\cite{CaponioMasiello2022}, a Sagnac--type delay as above can also arise in Lorentzian stationary spacetimes. This is not surprising as the future (resp. the past) null cones of any stationary spacetime can be locally seen as the future (resp. past) ones of a static Finsler spacetime \cite[\S 7]{CapSta18}; moreover these two spacetimes share (up to reparametrization) the same future--pointing (resp. past--pointing) null geodesics \cite[\S 6 and Appendix B]{CapSta18}.  To assess whether the static Lorentz--Finsler metric \eqref{ansatz} produces effects distinguishable from a stationary model, we consider in the next section  a purely timelike, clock-based test in two isocausal  models.

\section{Distinguishing static Finsler from stationary Lorentzian spacetimes under shared optical data}
\label{appendix}

In this section we present a way  to discriminate between a stationary Lorentzian metric $g$ and a static Lorentz--Finsler metric $L$ of the form \eqref{ansatz}. We perform the comparison in the most stringent setting, namely when the angular Randers metric $\overline F$ is exactly the one obtained from $g$ by the stationary–to–Randers correspondence \cite{CJS}. Without loss of generality we choose the time unit so that the lapse equals $1$ on $S_{r_0}:=\{r_0\}\times S^2$ in both models, where $r_0$ is a solution of $r\nu'(r)=2$. This can be done by a constant rescaling of $t$ and does not affect proper–time measurements along $S_{r_0}$. For simplicity, we then take the lapse to be identically $1$ in both models.

\subsubsection*{Stationary spacetime:}
\begin{itemize}
	\item the spacetime is $M=\mathbb R_t\times\mathbb R_r\times S^2$, and we set $\Sigma:=\mathbb R_r\times S^2$;
	\item let $h$ be a Riemannian metric on $S^2$ and $\omega$ a $1$–form on $S^2$ with $\|\omega\|_{h}<1$; we extend $\omega$ to $\R_r\times S^2$ as $\hat\omega(r,\cdot):=r\,\omega(\cdot)$, so $\hat\omega(\partial_r)=0$;
	\item  each slice $\{t=\bar t\}$ is spacelike and the  Riemannian metric induced by $g$ is  
	\[g_0:=e^{\nu(r)}dr^2\oplus r^2 h-\hat\omega\otimes\hat\omega=e^{\nu(r)}dr^2\oplus r^2( h-\omega\otimes\omega);\]
	\item the stationary spacetime metric is
	\begin{equation}\label{eq:stationary-standard}
		g=(dt-\hat\omega)^2 - e^{\nu(r)}dr^2 - r^2 h,
	\end{equation}
	(with the shorthand $(dt -\hat\omega)^2:=(dt-\hat\omega)\otimes(dt -\hat\omega)$ and the abuse of notation   concerning the pullback of tensors to $M$ that are denoted as the original ones).
\end{itemize}
\subsubsection*{Static Lorentz–Finsler spacetime:}
\begin{itemize}
	\item the spacetime is $M=\mathbb R_t\times\mathbb R_r\times S^2$.
	\item the angular Finsler metric on $S^2$ is of Randers type
	\begin{equation*}
		\overline F(v)=\alpha(v)+\omega(v),\qquad \alpha=\sqrt{h(v,v)},\qquad v\in TS^2
	\end{equation*}
	(here we avoid writing explicitly the base point $x\in S^2$  carried by $v$)
	with the same $h$ and $\omega$ as above.
	\item the Lorentz–Finsler metric is
	\[
	L(y^t,y^r,v)=(y^t)^2 - e^{\nu(r)}(y^r)^2 - r^2 \overline F^2(v) ,
	\]
	for each $(y^t,y^r,v)\in \R\times\R\times TS^2$.
\end{itemize}
For the stationary metric \eqref{eq:stationary-standard}, the {\em Fermat (optical) metric} on the shell $S_{r_0}$ is
\[
F_g\big|_{TS_{r_0}}(v)=\sqrt{r_0^2 h(v,v)}+r_0\omega(v) = r_0\big(\sqrt{h(v,v)}+\omega(v)\big)=r_0\overline F(v),
\]
for each $v\in TS^2$. For $L$, one likewise has $F_L|_{TS_{r_0}}(v)=r_0\,\overline F(v)$ (see \cite{CJS,CapSta18}). Hence both metrics $g$ and $L$  share the same future–pointing null geodesics, with constant radial component $r\equiv r_0$, provided  they exist, i.e. when $r_0\nu'(r_0)=2$ is satisfied  (cf.\ \cite[Prop.~4.1]{CJS} and \cite[Prop.~B.1]{CapSta18} and Proposition~\ref{circulargeo}). In particular, the time delay in the Sagnac--type effect described in \S\ref{sagnac} is the same in both geometries.
In order to distinguish the two geometries  we consider the following test involving some future--pointing timelike curves rather than null geodesics.

Let us fix a smooth closed loop $\sigma:[0,L]\to S_{r_0}$, parametrized by the $h$–arclength $s$ on $S^2$, so that $T:=\dot\sigma$ obeys $h(T,T)\equiv1$. Let us set $\omega(T)(s):=\omega_{\sigma(s)}\big(T(s)\big)$. 
A \emph{time–stretch profile} is a function $\mathfrak b:[0,L]\to(1,\infty)$ prescribing the $t$–speed along $\sigma$.

Along $\sigma$ we have  $dr\equiv0$, and
so $\frac{dt}{ds}=\mathfrak b$ must satisfy the following inequalities if we want that $t=t(s)$ is the component of a  future--pointing timelike curve for both orientation of $\sigma$:
\begin{align*}
	\textit{stationary:}\quad
	& \big(\mathfrak b\mp r_0\omega(T)\big)^2 - r_0^2>0\\
	\textit{static Lorentz--Finsler:}\quad
	&\mathfrak  b^2 - r_0^2\big(1\pm\omega(T)\big)^2>0
\end{align*}
(we stress that the second right-hand sides in both inequalities above are evaluated at $L-s$ and not $s$ when we consider the opposite parametrization of $\sigma$). Hence, recalling that $\|\omega\|_h<1$,  the condition 
\begin{equation*}
	\qquad \mathfrak b(s) > r_{0}\big(1+|\omega(T)(s)|\big),\qquad \text{for all }s\in[0,L].\qquad
\end{equation*}
guarantees that the curves $\gamma^\pm(s)=\big(\int_0^s \mathfrak b(u)du, r_0, \sigma^\pm(s)\big)$ are  future–pointing and timelike in both models and both orientations,  $\sigma^{+}$ and $\sigma^-$, of  $\sigma$.
Hence the proper times of these curves, depending on the profile $\mathfrak b$, are then
\begin{align}
	\tau_{g}^\pm(\mathfrak b)
	&=\int_0^L \sqrt{\big(\mathfrak b(s)\mp r_0\,\omega(T)(s)\big)^2-r_0^2}\;ds, \label{taug}\\
	\tau_{L}^{\pm}(\mathfrak b)
	&=\int_0^L \sqrt{\mathfrak b(s)^2-r_0^2\big(1\pm \omega(T)(s)\big)^2\,}\;ds. \label{tauL}
\end{align}
where  we have used the change of variable $s\mapsto L-s$ for $\tau_g^-$ and $\tau_L^-$.

Let us write, for brevity, $\omega=\omega(T)(s)$ and $-\omega=\omega(-T)(L-s)$

\subsubsection*{Stationary:}
From \eqref{taug} and $\mathfrak b\gg 1$ we factor the large terms $\mathfrak b\mp r_0\omega$:
\[
\sqrt{(\mathfrak b\mp r_0\omega)^2-r_0^2}=(\mathfrak b\mp r_0\omega)\,
\sqrt{1-\Big(\tfrac{r_0}{\mathfrak b\mp r_0\omega\,}\Big)^{\!2}},
\]
so that
\[
\sqrt{(\mathfrak b\mp r_0\omega)^2-r_0^2}
=(\mathfrak b\mp r_0\omega)
-\frac{r_0^2}{2(\mathfrak b\mp r_0\omega)}
+O\!\Big(\frac{r_0^4}{(\mathfrak b\mp r_0\omega)^3}\Big).
\]
Hence
\bal
\tau_{g}^{+}(\mathfrak b)-\tau_{g}^{-}(\mathfrak b)
&=\!\int_0^L\!\!\Big[-2r_0\omega
-\frac{r_0^2}{2}\Big(\frac{1}{\mathfrak b-r_0\omega}-\frac{1}{\mathfrak b+r_0\omega}\Big)
+O\!\Big(\frac{r_0^4}{(\mathfrak b\mp r_0\omega)^3}\Big) \Big]ds\nonumber\\
&=\!\int_0^L\!\!\Big[-2r_0\omega
-\frac{r_0^3\omega}{\mathfrak b^2-r_0^2\omega^2}
+O\!\Big(\frac{r_0^4}{(\mathfrak b\mp r_0\omega)^3}\Big) \Big]ds.\label{deltataug}
\eal

\subsubsection*{Static Lorentz--Finsler:}
From \eqref{tauL} and  $\mathfrak b\gg1$ we have:
\[
\sqrt{\mathfrak b^2-r_0^2(1\pm\omega)^2}
= \mathfrak b\,\sqrt{\,1-\Big(\tfrac{r_0}{\mathfrak b}\Big)^{\!2}(1\pm\omega)^2}\ .
\]
Then
\[
\sqrt{\mathfrak b^2-r_0^2(1\pm\omega)^2}
= \mathfrak b - \frac{r_0^2(1\pm\omega)^2}{2\mathfrak b}
+ O\!\Big(\frac{r_0^4}{\mathfrak b^3}\Big).
\]
Hence
\bal
\tau_{L}^{+}(\mathfrak b)-\tau_{L}^{-}(\mathfrak b)
&=\!\int_0^L\!\!\Big[
-\frac{r_0^2}{2\mathfrak b}\big((1+\omega)^2-(1-\omega)^2\big)
+O\!\Big(\frac{r_0^4}{\mathfrak b^3}\Big)\Big]ds\nonumber\\
&=\!\int_0^L\!\!\Big[
-\frac{2r_0^2\omega}{\mathfrak b}
+O\!\Big(\frac{r_0^4}{\mathfrak b^3}\Big)\Big]ds\label{deltatauL}
\eal
Let us define for $\mathfrak b\gg 1$, 
\[\mu:=\min_{s\in [0,L]}\big(\mathfrak b(s)-r_0|\omega(T)(s)|\big)>0.\]
Hence we get the following estimates in both models.

\subsubsection*{Stationary:} From \eqref{deltataug} 
\[
\tau_{g}^{+}(\mathfrak b)-\tau_{g}^{-}(\mathfrak b)
= -\,2r_0\!\int_0^L \omega(T)\,ds \;+\; O\Big(\frac{1}{\mu^2}\big).
\]
\subsubsection*{Static Lorentz--Finsler:} As $\mu\leq\min_{s\in[0,L]}\mathfrak b(s)$, from \eqref{deltatauL} 
\[
\tau_{L}^{+}(b)-\tau_{L}^{-}(\mathfrak b)
= -\,2r_0^2\!\int_0^L\!\frac{\omega(T)}{\mathfrak b(s)}\,ds \;+\; O\Big(\frac{1}{\mu^3}\Big)\ .
\]
In particular, in the stationary spacetime $(M,g)$ the leading term is  independent of $\mathfrak b$, whereas in the static Finsler spacetime $(M,L)$ the  leading term depends on the chosen profile.

Thus, on the sphere  $S_{r_0}$ the two models share the same optical data $F|_{TS_{r_0}}=r_0\overline F$ and the same  null geodesics with constant $r$--component, hence the Sagnac--type delays are identical. Nevertheless,  the  difference of proper time intervals  of some  future-pointing timelike curves projecting on the same  loop $\sigma$ traversed in both   orientations, distinguishes  the models.

\section{Finslerian field equations and their simplification}\label{FFE}
Black holes in static Finsler spacetimes have recently been studied in \cite{Nekouee2025}, where the authors claim to  obtain novel solutions via the extended gravitational decoupling (EGD) method \cite{Ovalle2019}, which they term   \emph{Finslerian hairy black holes}. In what follows we examine their work and point out  an attribution issue.

\subsection{The vacuum equations}
One of the remarkable features of static Finsler spacetimes with constant flag curvature angular sector is that the field equations simplify dramatically. Li and Chang \cite{LiChang2014} considered the Finslerian vacuum field equation proposed by Rutz \cite{Rutz}, which is a scalar equation on $\mathcal A$ given as $\text{Ric} = 0$, where Ric is the Ricci scalar of the Finsler metric $L$ (see \cite[\S 7.6]{Bao2000}).\footnote{Note that the Ricci scalar can be computed in $\mathcal  A$ starting with the positively homogeneous of degree $2$ function $L$, see e.g. \cite{CapMas20}. If  $\mathrm{Ric}=0$ on $\mathcal A$ then we can say that it is $0$ on $TM$, recall Remark~\ref{regularity}.} The constant flag curvature condition implies that the Finslerian Ricci scalar of $\overline F$ is also constant, $\overline{\text{Ric}} = K$ (see \cite{Bao2000}, Exercise 7.6.2). 
The nice observation in \cite{LiChang2014} is that when the angular metric $\overline{F}$ has constant Ricci scalar  $\overline{\text{Ric}} =K$, the equation $\mathrm{Ric}=0$ reduces to three ordinary differential equations in the radial coordinate alone and the anisotropy, relegated to the angular sector, disappears completely.   Actually, the fact that the angular part does not enter the field equations in the spherical symmetric ansatz is well-known in the Lorentzian framework (see \cite{Poisson}, p. 266). In conclusion, the equations obtained are the same as in the static, spherical symmetric Lorentzian setting with the only difference given by the curvature constant $K$ that can be different from $1$. Thus the time and radial components of the solution after fixing some constants of integration take the form \cite[Eqs. (61)--(62)]{LiChang2014}:
\begin{equation}\label{FSc}
	e^\nu = 1 - \frac{2M}{K r}, \quad e^{-\vartheta} = K - \frac{2M}{r},
\end{equation} 
where $M$ is a mass parameter. 

\subsection{The Finslerian hairy black hole solutions}\label{nekouee} 
In the presence of matter, the authors in \cite{Nekouee2025} follow   \cite{LiChang2014}, where
the Akbar--Zadeh Ricci tensor and scalar \cite{AkbarZadeh1988} are considered: 
\[
\mathrm{Ric}_{ij}:=\frac{1}{2}\,\frac{\partial^{2}}{\partial y^{i}\partial y^{j}}\big(L\mathrm{Ric}\big),
\qquad
S:=g^{ij}\mathrm{Ric}_{ij},
\]
from which they build  the Finslerian Einstein tensor
$G_{\mu\nu}=\mathrm{Ric}_{ij}-\frac{1}{2}g_{ij}S$
and then the field equations on $\mathcal A$:
 \begin{equation}\label{eq:Finsler-EFE}
 	G^\mu_\nu \;:=\; \mathrm{Ric}^\mu_\nu - \tfrac{1}{2}\,\delta^\mu_\nu\,S \;=2\operatorname{area}_{\mathcal F_K}(S^2)T^\mu_\nu,
 \end{equation}
obtained by raising indexes using the inverse $g^{ij}$ of the fundamental tensor $g$ in \eqref{fundtens}; 
here $\operatorname{area}_{\overline F_K}(S^2)$
is the area of the Randers sphere $(S^2, \overline F_K)$ with respect to a fixed  volume form (e.g. the 
Busemann--Hausdorff one, see \cite[Example 2.2.2]{Shen01}) that the authors of \cite{LiChang2014,Nekouee2025} denote with $4\pi_{\overline F}$.
We emphasize that \eqref{eq:Finsler-EFE} is introduced by \emph{analogy} with the Einstein field equations and it is not obtained from a variational principle.
It is important to say that action based formulations of Finsler gravity exist in literature at least when $T^\mu_\nu=0$, see  \cite{PfeiferWohlfarth2012,HohmannPfeiferVoicu2019,GarciaParradoMinguzzi2022GRG,JavaloyesSanchezVillasenor}.

The authors of \cite{Nekouee2025} start from equation \eqref{eq:Finsler-EFE} considering a   so-called {\em anisotropic fluid}: the  stress--energy tensor  is diagonal with components depending only on $r$ and the  tangential and radial pressure are different, i.e. $T_{rr}\neq T_{\theta\theta}=T_{\phi\phi}$.

Again the constant curvature assumption on $\overline F$ reduces essentially the field equations to the standard Einstein field equations in the static spherically  symmetric setting (see \cite[Eq. (82.2)]{Tolman}):
\begin{equation*}
	G^t_t=2\operatorname{area}_{\mathcal F_K}(S^2)T^t_t,\qquad G^r_r=2\operatorname{area}_{\mathcal F_K}(S^2)T_r^r,\qquad G^\theta_\theta=2\operatorname{area}_{\mathcal F_K}(S^2)T^\theta_\theta,
\end{equation*}
where
\begin{align}
	G^t_{t} &= e^{-\vartheta}\!\left(\frac{\vartheta'}{r}-\frac{1}{r^2}\right)+\frac{K}{r^2}, \nonumber\\
	G^r_{r} &= e^{-\vartheta}\!\left(-\frac{\nu'}{r}-\frac{1}{r^2}\right)+\frac{K}{r^2}, \nonumber\\
	G^\theta_{\theta} &= G^\phi_{\phi} \;=\; \frac{e^{-\vartheta}}{4}\!\left(2\nu''+(\nu')^2-\nu'\vartheta' + \frac{2}{r}(\nu'-\vartheta')\right). \label{eq:Gang}
\end{align}
Equations \eqref{eq:Finsler-EFE}---\eqref{eq:Gang} reproduce the vacuum solution quoted above when $T^\mu_\nu\equiv0$ \cite{LiChang2014}. To incorporate hair, the authors  follow the (extended) gravitational decoupling (EGD) prescription \cite{Ovalle2019}: they 
deform the vacuum solution \eqref{FSc} as follows:
\begin{equation*}
	e^{-\vartheta(r)}=\mu(r)+\alpha\,g(r),\qquad 
	e^{\nu(r)}=\frac{\mu(r)}{K}+\alpha\,h(r),\qquad 
	\mu(r):=K-\frac{2M}{r},
\end{equation*}
so that the empty solution is recovered at the parameter $\alpha=0$.  It is then not surprising that the solution in \cite{Nekouee2025} bears a striking resemblance to the hairy black hole solution obtained by Ovalle et al. \cite{Ovalle2021} using the gravitational decoupling method in the same static, spherical  framework but with the standard round sphere as the angular sector.
In fact, the solution in \cite{Nekouee2025} is (see \cite[Eq. (67)]{Nekouee2025}):
\begin{equation}\label{FHSchmetric}
	e^{\nu(r)}=e^{-\vartheta(r)}
	= 1-\frac{2M+\alpha\,\ell/K}{K\,r}+\alpha\,e^{-K r/M},
\end{equation}
while  Ovalle et al.'s  solution takes the form (see \cite[Eq. (69)]{Ovalle2021}):
\begin{equation}\label{ovalle}
	e^{\nu(r)} = e^{-\vartheta(r)} = 1 - \frac{2\mathcal{M}}{r} + \alpha e^{-r/(\mathcal{M}-\alpha\ell/2)},
\end{equation} 
where $\mathcal{M} = M + \alpha\ell/2$.  In both cases the constant $\alpha\ell$ is the so-called {\em primary hair}.  
When we account for the different notation and the rescaling by the flag curvature $K$, these solutions are mathematically identical off the angular sector. More specifically, setting $K = 1$, \eqref{FHSchmetric} reproduces \eqref{ovalle} exactly. 

In conclusion, the derivation in Nekouee et al. follows precisely the same extended gravitational decoupling method, applying the same techniques to arrive at the same result (comparing pages 8--12 in \cite{Nekouee2025} with pages 3--5 in \cite{Ovalle2021} reveals that the calculations are essentially identical).

Despite this clear correspondence, Nekouee et al.\ do not cite \cite{Ovalle2021} by Ovalle et al., even though they cite other works by Ovalle and collaborators on the gravitational decoupling method.

\section{Conclusions}\label{end}
We have presented an analysis of static Lorentz--Finsler metrics in spherical coordinates  where the ``Finslerianity'' is confined  to the sphere $S^2$ and we have observed  that the condition for circular null geodesics (photon sphere) in these class of static Finsler spacetimes  takes the same form as in general relativity: 
\[r\nu'(r) = 2.\] 
This result holds independently of the specific angular Finsler structure, provided this Finsler structure  does not depend on the radial coordinate. 

When the Finsler metric $\overline F$ on $S^2$ is a Randers metric with constant flag curvature, 
we have  evaluated the critical impact parameter (defined as the ratio  $|\mathcal J|/E$ between the constants of motion $\mathcal J$ and $E$ at the photon sphere $r=r_C$) for any circular null future--pointing geodesic, see \eqref{bCany}. We have obtained the  orbit equation on the totally geodesic plane $\theta=\pi/2$ and  the value of the  impact parameter at a turning point. We also have quantified the asymmetry of $\overline F$ through a Sagnac--type effect. Since the same effect can be measured in a stationary Lorentzian spacetime, we have showed in Section~\ref{appendix} that this class of spacetimes can be distinguished from static Lorent-Finsler one by a clock-based test.

Finally, we have addressed the relationship with the recent results in \cite{Nekouee2025}. While this reference investigates the thermodynamics of the hairy black hole model discussed therein, its optical analysis is incomplete as it neglects the intrinsic non-reversibility of the Finsler metric. Consequently, it fails to capture the  difference  between the co-rotating and counter-rotating critical impact parameters.
Furthermore, regarding the solution itself, we clarify that the hairy black hole spacetime presented in \cite{Nekouee2025} is not a novel derivation; the metric coefficients  in the $(t,r)$--sector are mathematically identical to those obtained by Ovalle et al. \cite{Ovalle2021}, differing only by a constant rescaling involving the flag curvature parameter $K$.

\section*{Acknowledgments}
	\noindent  We sincerely thank the  referees for their   valuable comments. 
	
	\noindent This work  is  partially supported by  PRIN 2022 PNRR {``P2022YFAJH Linear and Nonlinear PDE's: New directions and Application''},  by  MUR under the Programme ``Department of Excellence'' Legge 232/2016  (Grant No. CUP - D93C23000100001) and by  GNAMPA INdAM - Italian National Institute of High Mathematics.


\begin{thebibliography}{10}
\bibitem{perlick06}
V. Perlick, 
{\em Fermat principle in {F}insler spacetimes}, Gen. Relativity Gravitation {\bf 38}  (2006), 365--380.

\bibitem{Minguzzi2015CMP}
E. Minguzzi, \emph{Light cones in Finsler spacetime},
Commun. Math. Phys. \textbf{334} (2015), 1529--1551.

\bibitem{Minguzzi2019RMP}
E. Minguzzi, \emph{Causality theory for closed cone structures with applications},
Rev. Math. Phys. \textbf{31} (2019), 1930001.



\bibitem{JavaloyesSanchez2020RACSAM}
M.{\ }A. Javaloyes and M. S\'anchez, \emph{On the definition and examples of cones and Finsler spacetimes},
Rev. R. Acad. Cienc. Exactas F\'{\i}s. Nat. Ser. A Mat. (RACSAM) \textbf{114} (2020), 30.
	
\bibitem{HohmannPfeiferVoicu2022} M. Hohmann, C. Pfeifer, and N. Voicu, \emph{Mathematical foundations for field theories on Finsler spacetimes}, J. Math. Phys. \textbf{63} (2022), 032503.

	
	
\bibitem{Bao2000} D. Bao, S.S. Chern, and Z. Shen, \emph{An Introduction to Riemann-Finsler Geometry}, Springer, New York, 2000.
	
	
\bibitem{CapSta16} E. Caponio and G. Stancarone, \emph{Standard static Finsler spacetimes}, Int. J. Geom. Methods Mod. Phys. \textbf{13} (2016), 1650040.

\bibitem{LiChang2014} X. Li and Z. Chang, \emph{Exact solution of vacuum field equation in Finsler spacetime}, Phys. Rev. D \textbf{90} (2014), 064049.

\bibitem{Bryant1996} R.L. Bryant, \emph{Finsler structures on the 2-sphere satisfying $K=1$}, in: Finsler Geometry, Contemporary Mathematics \textbf{196}, Amer. Math. Soc., Providence, RI, (1996), 27--41.


\bibitem{Shen2002}
Z.~Shen,
{\em Two-dimensional Finsler metrics with constant flag curvature},
{\em Manuscripta Math.} \textbf{109} (2002), 349--366.

\bibitem{Rutz} S.F. Rutz, \emph{A Finsler generalisation of Einstein's vacuum field equations}, General Relativity and Gravitation \textbf{25} (1993), 1139--1158.

\bibitem{Robles2007} C.~Robles, {\em Geodesics in Randers spaces of constant curvature}, Trans. AMS \textbf{359} (2007), 1633--1651. 


\bibitem{Nekouee2025} Z. Nekouee, S.K. Narasimhamurthy, B.R. Yashwanth, and T. Sanjay, \emph{Exploring null geodesic of Finslerian hairy black hole}, Class. Quantum Grav. \textbf{42} (2025), 045002.

\bibitem{Ovalle2021} J. Ovalle, R. Casadio, E. Contreras, and A. Sotomayor, \emph{Hairy black holes by gravitational decoupling}, Phys. Dark Universe \textbf{31} (2021), 100744.
	
	


\bibitem{LPH} C. L\"ammerzahl, V. Perlick and  W. Hasse, \emph{Observable effects in a class of spherically symmetric static Finsler spacetimes}, Phys. Rev. D \textbf{86} (2012), 104042.


\bibitem{CapSta18} E. Caponio and G. Stancarone, \emph{On Finsler spacetimes with a timelike Killing vector field}, Class. Quantum Grav. \textbf{35} (2018), 085007.



\bibitem{BaoRoblesShen2004} D.~Bao, C.~Robles, and Z.~Shen, {\em Zermelo navigation on Riemannian manifolds}, J.  Differential Geom. \textbf{66} (2004), 377--435.

\bibitem{CJS14}
 E. {Caponio}, M~A. {Javaloyes} and M. {S{\'a}nchez},
{\em  Wind Finslerian structures: from Zermelo's navigation to the
	causality of spacetimes},  Memoirs of AMS \textbf{300}  (2024), 1501. 
	
\bibitem{capcor}
E.	Caponio and D. Corona,
	{\em A variational setting for an indefinite {L}agrangian with an affine {N}oether
	charge}, 
	Calc. Var. Partial Differential Equations 	\textbf{62}	(2023), 39.

\bibitem{CaponioMasiello2022}
E.~Caponio and A.~Masiello,
\emph{A note on the Sagnac effect in general relativity as a Finslerian effect},
Found. Phys. \textbf{52} (2022), 5.



\bibitem{Ovalle2019} J. Ovalle, \emph{Decoupling gravitational sources in general relativity: The extended case}, Phys. Lett. B \textbf{788} (2019), 213--218.


\bibitem{CapMas20}
E. Caponio and A. Masiello, {\em On the analyticity of static solutions of a field equation in {F}insler gravity},
Universe \textbf{6} (2020), 59.	




\bibitem{Poisson} E. Poisson, C. M. Will, \emph{Gravity: Newtonian, Post-Newtonian, Relativistic}, Cambridge University Press, Cambridge, 2014.


\bibitem{Tolman} R. C. Tolman, \emph{Relativity, Thermodynamics and Cosmology}, Oxford University Press, Oxford, 1934.


\bibitem{AkbarZadeh1988}
H.~Akbar{-}Zadeh,
\emph{Sur les espaces de Finsler \`a courbures sectionnelles constantes},
Bull. Classe des Sciences, Acad\'emie Royale de Belgique, \textbf{74} (1988), 281--322.

\bibitem{Shen01}
Z.~Shen, {\em Lectures on {F}insler geometry}, World Scientific
Publishing Co., Singapore, 2001.



\bibitem{PfeiferWohlfarth2012} C. Pfeifer and M. N. R. Wohlfarth, \emph{Finsler geometric extension of Einstein gravity}, Phys. Rev. D \textbf{85} (2012), 064009.

\bibitem{HohmannPfeiferVoicu2019} M. Hohmann, C. Pfeifer, and N. Voicu, \emph{Finsler gravity action from variational completion}, Phys. Rev. D \textbf{100} (2019), 064035.

\bibitem{GarciaParradoMinguzzi2022GRG}
A. García-Parrado and E. Minguzzi, \emph{An anisotropic gravity theory},
Gen. Relativ. Gravit. \textbf{54} (2022), 150.


\bibitem{JavaloyesSanchezVillasenor}
M.{\ }A. Javaloyes, M. S\'anchez, and F.{\ }F. Villase\~nor, \emph{The Einstein--Hilbert--Palatini formalism in pseudo-Finsler geometry},
Adv. Theor. Math. Phys. \textbf{26} (2022),  3563--3631.

\bibitem{CJS} 
E. Caponio, M.A. Javaloyes and  M. S\'anchez, 
\emph{On the interplay between Lorentzian Causality and Finsler metrics of Randers type}, Rev. Mat. Iberoamericana \textbf{27}  (2011), 919--952.

\end{thebibliography}
\end{document}